\documentclass[aps,prl,twocolumn,10pt,superscriptaddress]{revtex4-2}

\usepackage{amsmath,amssymb,amsfonts}
\usepackage{graphicx}
\usepackage{bm}
\usepackage{xcolor}
\usepackage{amsthm}

\newtheorem{proposition}{Proposition}

\usepackage{hyperref}

\begin{document}

\title{Transient Acceleration and Cross-Dissipation Interference in Fisher-Regularized Wasserstein Gradient Flows}

\author{Michael Farmer}
\author{Abhinav Kochar}
\author{Yugyung Lee}
\affiliation{University of Missouri--Kansas City}

%\date{}

\begin{abstract}
We study transient nonequilibrium dynamics in Fisher-regularized Wasserstein gradient flows and identify a sign-changing cross-dissipation mechanism generated by the coupling between transport dissipation and Fisher-information geometry. Using the Ornstein--Uhlenbeck Fokker--Planck system as an analytically tractable setting, we derive an exact reduced variance dynamics on the Gaussian manifold,
\[
\dot{u}
=
2(1-u)
+
\frac{\varepsilon}{u},
\]
where \(u(t)=\sigma^2(t)\) denotes the variance and \(\varepsilon>0\) is the Fisher regularization strength. The reduced dynamics exhibit distinct transient regimes arising from the interaction between transport relaxation and information-geometric curvature. The associated cross-dissipation term changes sign at the critical scale \(\sigma=1\), separating a cooperative acceleration regime for localized states with \(\sigma<1\) from a transient interference regime at larger variance scales. In the subcritical regime, Fisher curvature cooperates with transport dissipation and accelerates the descent of the baseline free energy, whereas beyond the critical transition, the Fisher contribution partially opposes the Ornstein--Uhlenbeck pullback and generates transient overshoot toward a displaced Fisher-regularized equilibrium.

We further establish a bounded transient-acceleration-window result, showing that the cooperative acceleration phase has finite duration with an upper bound determined solely by the Fisher regularization strength and independent of the initial condition. Finite-difference simulations are consistent with the analytical predictions and suggest that qualitatively similar sign-transition behavior may persist beyond Gaussian closure for non-Gaussian initial conditions, including bimodal and Laplace distributions. Overall, the results provide a transient dynamical perspective on Fisher-regularized dissipative systems and show how information-geometric curvature can reorganize intermediate-time Wasserstein relaxation through cooperative acceleration, transient interference, overshoot, and equilibrium displacement while preserving the globally dissipative structure of the flow.
\end{abstract}

\maketitle

%% main text
\section{Introduction}

Dissipative gradient flows provide a fundamental mathematical framework for describing relaxation phenomena in statistical physics, nonlinear diffusion, kinetic theory, and optimal transport \cite{jordan1998variational,otto2001geometry,ambrosio2008gradient,carrillo2003kinetic,arnold2001convex}. In particular, Wasserstein gradient flows characterize the evolution of probability densities through transport-driven dissipation of free-energy functionals and establish a variational formulation for irreversible dynamics and Fokker--Planck equations \cite{villani2003topics,villani2009optimal}. These formulations reveal how transport geometry governs convergence toward equilibrium through monotone free-energy dissipation and minimizing-movement dynamics.

Many dissipative systems additionally involve geometric or information-theoretic regularization mechanisms that modify transient relaxation behavior. Among these, Fisher-information regularization arises naturally in information geometry, entropy-production theory, quantum drift-diffusion systems, Schrödinger-type formulations, and higher-order variational models \cite{arnold2001convex, amari2000methods,carlen1991superadditivity,gianazza2009wasserstein,dolbeault2006logarithmic,jungel2009transport,matthes2009family}. Related structures also appear through the Bohm quantum potential and Fisher--Rao geometric formulations of density evolution \cite{jungel2009transport,bohm1952suggested,madelung1927quantentheorie,jungel2001quasi}. Such regularization introduces curvature-dependent effects that become particularly important in strongly localized regimes, where higher-order geometric contributions can substantially influence dissipative dynamics.

While Fisher-information regularization has been studied extensively in gradient-flow theory, entropy dissipation, quantum diffusion, and higher-order Wasserstein dynamics, most existing analyses focus primarily on regularity, asymptotic convergence, equilibrium structure, or well-posedness. By contrast, considerably less attention has been devoted to transient nonequilibrium dynamics generated by the coupling between transport dissipation and Fisher curvature. In particular, the intermediate-time consequences of this interaction in strongly localized regimes remain insufficiently understood from a nonlinear-dynamical perspective.

In this work, we study Fisher-regularized Wasserstein gradient flows and identify a sign-changing cross-dissipation mechanism that produces transient acceleration and interference across distinct variance regimes. Rather than altering asymptotic stability itself, the Fisher contribution reorganizes the intermediate-time relaxation structure of the flow through cooperative acceleration, transient interference, overshoot, and equilibrium displacement. Our analysis focuses on the Ornstein--Uhlenbeck Fokker--Planck system, which provides an analytically tractable setting for isolating the interaction between transport dissipation and Fisher-information curvature.

The coupling is defined directly at the PDE level through the cross-dissipation interaction
\[
D_\times(t)
=
-
\int
\rho \nabla \mu_0 \cdot \nabla \mu_F \, dx,
\]
which measures the interaction between the baseline transport dissipation and the Fisher-curvature contribution. Under this convention, positive values of $D_\times(t)$ correspond to transient interference, whereas negative values correspond to cooperative acceleration. Similar dissipation decompositions arise in entropy-production and higher-order diffusion analyses; however, their role as sign-changing transient mechanisms in Fisher-regularized Wasserstein systems remains comparatively less understood \cite{arnold2001convex,gianazza2009wasserstein,jungel2009transport,matthes2009family}.

To explicitly analyze this interaction, we restrict the dynamics to the Gaussian manifold. Within the Gaussian ansatz, the Fisher-regularized dynamics reduce exactly to the variance equation
\[
\dot{u}
=
2(1-u)
+
\frac{\varepsilon}{u},
\]
where
\[
u(t)\equiv\sigma^2(t)
\]
denotes the evolving variance and $\varepsilon>0$ controls the Fisher regularization strength. These reduced dynamics are not introduced merely as a low-dimensional approximation; rather, they provide an analytically explicit realization of the PDE-level interaction between transport relaxation and Fisher curvature.

Within this reduced setting, the cross-dissipation interaction admits the closed-form expression
\[
D_\times(\sigma)
=
\frac{\sigma^2-1}{2\sigma^4}.
\]
This coefficient changes sign at the critical scale $\sigma=1$, separating a cooperative acceleration regime for localized states with $\sigma<1$ from a transient interference regime for larger variance scales. In the subcritical regime, Fisher curvature cooperates with transport dissipation and accelerates the descent of the baseline free energy. Beyond the critical transition, the Fisher contribution partially opposes the Ornstein--Uhlenbeck pullback, producing transient overshoot before relaxation toward a displaced Fisher-regularized equilibrium.

Prior work on Fisher-information regularization, quantum drift-diffusion, DLSS-type equations, and higher-order Wasserstein gradient flows has primarily emphasized regularity, entropy dissipation, asymptotic convergence, equilibrium structure, and well-posedness \cite{arnold2001convex,gianazza2009wasserstein,dolbeault2006logarithmic,jungel2009transport,matthes2009family,fischer2014infinite,bukal2021well}. The present work instead isolates the cross-dissipation interaction between Ornstein--Uhlenbeck transport relaxation and Fisher-information curvature as a sign-changing transient mechanism. This perspective connects the critical sign transition of the interaction coefficient to cooperative acceleration, transient interference, overshoot, and bounded acceleration windows, thereby highlighting an intermediate-time dynamical effect that is not visible from asymptotic dissipation alone.

We further establish a bounded transient-acceleration-window result showing that the cooperative acceleration phase has finite duration, with an upper bound determined solely by the Fisher regularization strength and independent of the initial condition. Beyond the Gaussian manifold, finite-difference simulations with bimodal and Laplace initial conditions exhibit qualitatively similar transient behavior, including cooperative acceleration, sign-transition dynamics, and transient overshoot. These observations suggest that related transient interaction mechanisms may extend beyond Gaussian-manifold dynamics, although the present analysis does not constitute a general PDE-level characterization beyond the reduced setting.

\begin{figure*}[t]
    \centering
    \includegraphics[width=\textwidth]{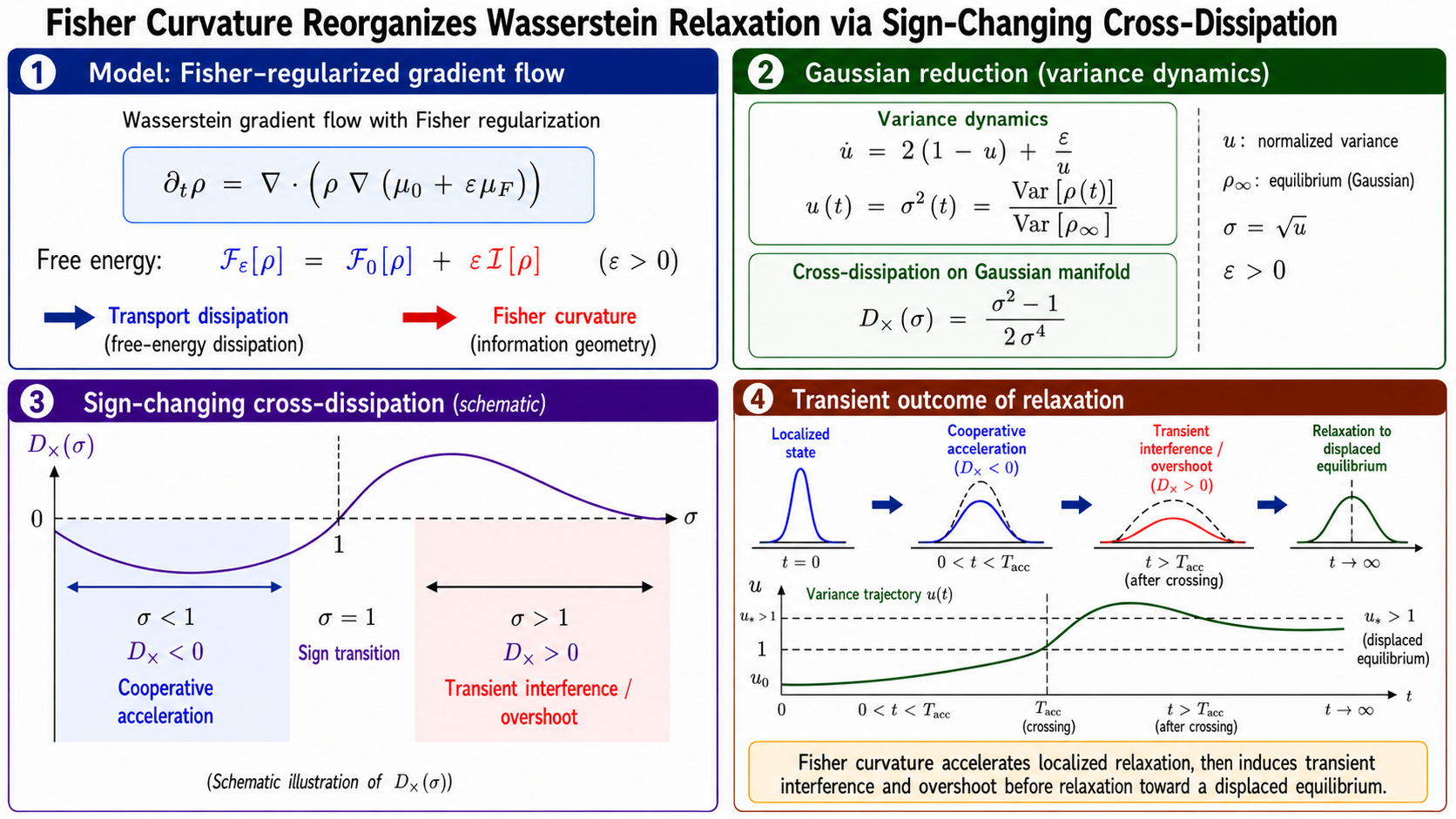}
    \caption{
    Graphical overview of the proposed Fisher-regularized Wasserstein relaxation framework.
    The model combines transport dissipation with Fisher-information curvature, reduces the
    dynamics to a Gaussian variance equation, and reveals a sign-changing cross-dissipation term that explains cooperative acceleration for localized states and transient interference
    or overshoot before relaxation to a displaced equilibrium.
    }
    \label{fig:graphical_abstract}
\end{figure*}

Figure~\ref{fig:graphical_abstract} summarizes the main mechanism studied in this work.
Starting from a Fisher-regularized Wasserstein gradient flow, the dynamics reduce on the
Gaussian manifold to a scalar variance equation. This reduction exposes a cross-dissipation
term whose sign changes at the equilibrium variance, explaining why Fisher curvature can
initially accelerate localized relaxation but later induce transient interference and overshoot
before convergence to a displaced equilibrium.

The main contributions of this work are summarized as follows:
\begin{itemize}
\item We identify a sign-changing cross-dissipation interaction governing the coupling between transport dissipation and Fisher-information curvature.

\item We derive an analytically tractable Gaussian-manifold reduction that exhibits cooperative acceleration, transient interference, overshoot, and equilibrium displacement.

\item We establish a bounded transient-acceleration-window result whose upper bound depends only on the Fisher regularization strength.

\item We provide numerical evidence suggesting that qualitatively similar sign-transition behavior may persist beyond Gaussian closure for representative non-Gaussian initial conditions.
\end{itemize}

The present work is intended primarily as a transient dynamical study of Fisher-regularized Wasserstein flows rather than a complete classification theory for information-regularized dissipative systems. The analysis is developed in a one-dimensional setting, combining analytically tractable Gaussian-manifold reductions with representative numerical experiments beyond Gaussian closure.

The remainder of the paper is organized as follows. Section~2 introduces Fisher-regularized Wasserstein gradient flows and the associated dissipation structure. Section~3 presents analytical and numerical investigations of the transient interaction dynamics for Gaussian and non-Gaussian initial conditions. Section~4 discusses broader implications and limitations of the transient interaction framework. Finally, Section~5 concludes the paper.

\section{Fisher-Regularized Wasserstein Gradient Flows}
\label{sec:background}

\subsection{Gradient Flow Formulation}

Let $\rho(x,t)$ denote a probability density on $\mathbb{R}^d$ satisfying
\[
\int_{\mathbb{R}^d}\rho(x,t)\,dx=1,
\]
and consider the baseline free-energy functional
\[
F_0[\rho]
=
\int_{\mathbb{R}^d}
\left(
V(x)\rho(x)
+
\rho(x)\log\rho(x)
\right)\,dx,
\]
where $V(x)$ is a confining potential. The associated Wasserstein gradient flow generates the classical Fokker--Planck equation
\cite{jordan1998variational,otto2001geometry,ambrosio2008gradient,carrillo2003kinetic}
\[
\partial_t\rho
=
\nabla\cdot
\left(
\rho\nabla\mu_0
\right),
\]
where the baseline chemical potential is
\[
\mu_0
=
\frac{\delta F_0}{\delta\rho}
=
V(x)+\log\rho .
\]
Along this flow, the baseline free energy satisfies the dissipation identity
\[
\frac{dF_0}{dt}
=
-
\int_{\mathbb{R}^d}
\rho |\nabla\mu_0|^2\,dx
\leq 0.
\]
This identity reflects the entropy-dissipative structure of the Wasserstein flow and implies monotone relaxation toward equilibrium
\cite{arnold2001convex,gianazza2009wasserstein}.

To incorporate information-geometric regularization, we augment the baseline free energy with the Fisher information functional
\[
\mathcal{I}[\rho]
=
\int_{\mathbb{R}^d}
|\nabla\sqrt{\rho}|^2\,dx .
\]
The Fisher-regularized free energy is defined by
\[
F_\varepsilon[\rho]
=
F_0[\rho]
+
\varepsilon \mathcal{I}[\rho],
\]
where $\varepsilon>0$ controls the regularization strength. Its variational derivative is
\[
\mu_\varepsilon
=
\frac{\delta F_\varepsilon}{\delta\rho}
=
\mu_0+\varepsilon\mu_F,
\]
where
\[
\mu_F
=
-
\frac{\Delta\sqrt{\rho}}{\sqrt{\rho}} .
\]
The Fisher contribution is closely related to the Bohm quantum potential, Fisher--Rao geometry, and quantum drift-diffusion formulations
\cite{jungel2009transport, bohm1952suggested,madelung1927quantentheorie,jungel2001quasi}.

The Fisher-regularized Wasserstein gradient flow therefore satisfies
\[
\partial_t\rho
=
\nabla\cdot
\left[
\rho\nabla
\left(
\mu_0+\varepsilon\mu_F
\right)
\right].
\]
Related higher-order Wasserstein, Fisher-information, and quantum diffusion structures have been studied in nonlinear fourth-order dissipative equations, DLSS-type models, and transport-driven variational flows
\cite{otto2001geometry,gianazza2009wasserstein,dolbeault2006logarithmic,jungel2009transport, matthes2009family,fischer2014infinite}.
Unlike purely transport-driven dissipation, the Fisher contribution introduces curvature-dependent higher-order effects that become especially pronounced for localized states and short-scale dynamics.

\subsection{Cross-Dissipation Interaction}

We next examine the dissipation of the baseline free energy $F_0[\rho]$
along the Fisher-regularized dynamics. Differentiating $F_0[\rho]$ along the
flow gives
\[
\frac{dF_0}{dt}
=
-
\int
\rho |\nabla\mu_0|^2\,dx
-
\varepsilon
\int
\rho
\nabla\mu_0
\cdot
\nabla\mu_F
\,dx .
\]
The first term is the classical transport dissipation, while the second term
captures the interaction between transport relaxation and Fisher-information
curvature.

We define the cross-dissipation interaction by
\[
D_\times(t)
=
-
\int
\rho
\nabla\mu_0
\cdot
\nabla\mu_F
\,dx .
\]
With this convention, the dissipation identity becomes
\[
\frac{dF_0}{dt}
=
-
\int
\rho |\nabla\mu_0|^2\,dx
+
\varepsilon D_\times(t).
\]
Thus, positive values of $D_\times(t)$ partially offset the baseline transport
dissipation and correspond to transient interference. Conversely, negative
values of $D_\times(t)$ make the right-hand side more negative and therefore
correspond to cooperative acceleration of the descent of $F_0$.

Importantly, $D_\times(t)$ is defined at the PDE level and does not rely on
the Gaussian-manifold reduction introduced below. The Gaussian-manifold
analysis provides an analytically explicit realization of this interaction and
allows the associated sign transition to be characterized in closed form
within the reduced setting.

In the Ornstein--Uhlenbeck setting considered below, the interaction admits
the Gaussian-manifold representation
\[
D_\times(\sigma)
=
\frac{\sigma^2-1}{2\sigma^4},
\]
which changes sign at the critical scale $\sigma=1$.

This sign convention can also be verified directly from the reduced Gaussian
variance dynamics. For the centered Gaussian family, the baseline free energy
satisfies
\[
F_0'(u)
=
\frac{u-1}{2u},
\]
and the Fisher-regularized variance equation is
\[
\dot u
=
2(1-u)+\frac{\varepsilon}{u}.
\]
Therefore,
\[
\frac{dF_0}{dt}
=
F_0'(u)\dot u
=
-\frac{(u-1)^2}{u}
+
\varepsilon \frac{u-1}{2u^2}.
\]
Since
\[
\int \rho |\nabla\mu_0|^2\,dx
=
\frac{(u-1)^2}{u},
\]
this agrees with
\[
\frac{dF_0}{dt}
=
-
\int
\rho |\nabla\mu_0|^2\,dx
+
\varepsilon D_\times(t),
\]
and confirms that, on the Gaussian manifold,
\[
D_\times(\sigma)
=
\frac{\sigma^2-1}{2\sigma^4}.
\]

\begin{proposition}[Sign transition of the cross-dissipation interaction]
For Gaussian states with variance
\[
u(t)\equiv\sigma^2(t),
\]
the cross-dissipation interaction
\[
D_\times(\sigma)
=
\frac{\sigma^2-1}{2\sigma^4}
\]
has a unique sign transition at the critical scale
\[
\sigma=1.
\]
Moreover,
\[
D_\times(\sigma)<0
\quad
(\sigma<1),
\]
and
\[
D_\times(\sigma)>0
\quad
(\sigma>1).
\]
Consequently, Fisher curvature cooperates with transport dissipation below
the critical scale and partially opposes the transport-driven pullback above
the critical transition.
\end{proposition}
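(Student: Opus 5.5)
The plan is to take the closed form $D_\times(\sigma)=(\sigma^2-1)/(2\sigma^4)$ as already established in the preceding computation. That computation gave it in two ways: by matching $dF_0/dt=F_0'(u)\dot u$ against the PDE-level dissipation identity, and by direct evaluation on centered Gaussians. The whole statement then reduces to a sign analysis of this rational function on the physically admissible domain $\sigma>0$. Working in the variance variable $u=\sigma^2>0$ is convenient, since it gives $D_\times=(u-1)/(2u^2)$, and this is also the form in which it appears in the reduced dissipation identity $\frac{dF_0}{dt}=-\frac{(u-1)^2}{u}+\varepsilon\frac{u-1}{2u^2}$.

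The first step is to note that the denominator $2\sigma^4=2u^2$ is strictly positive for every admissible Gaussian state. As a result, the sign of $D_\times$ coincides with the sign of the numerator $\sigma^2-1=u-1$. The second step is to observe that $u\mapsto u-1$ is strictly increasing and has the single zero $u=1$. Because $\sigma\mapsto\sigma^2$ is a bijection of $(0,\infty)$ onto itself, this zero corresponds to the unique point $\sigma=1$. It follows that $D_\times<0$ for $0<\sigma<1$, $D_\times>0$ for $\sigma>1$, and $D_\times=0$ only at $\sigma=1$. This is exactly the claimed unique sign transition, and it is a genuine crossing rather than a tangency because the numerator changes sign there.

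The third step is the interpretive ``consequently'' clause. For this I will substitute the sign information into $\frac{dF_0}{dt}=-\int\rho|\nabla\mu_0|^2\,dx+\varepsilon D_\times(t)$, using $\varepsilon>0$. When $\sigma<1$, the term $\varepsilon D_\times$ is negative, so it adds to the strictly negative transport dissipation $-(u-1)^2/u$ and the descent of $F_0$ is accelerated; this is the cooperative regime. When $\sigma>1$, the term is positive and partially offsets the transport dissipation, which is the interference regime. I may also remark that the Fisher term $\varepsilon/u$ in $\dot u$ always pushes $u$ upward. It therefore aligns with the OU pullback $2(1-u)$ when $u<1$ and opposes it when $u>1$, which independently confirms the stated interpretation.

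No step here is a real obstacle, since the argument is elementary once the closed form is given. The only point that needs care is making sure the closed form is legitimately available, in particular that the identification of $D_\times$ via $F_0'(u)\dot u$ holds for all $u>0$, including away from equilibrium. I will rely on the preceding paragraph's direct Gaussian computation of $\int\rho|\nabla\mu_0|^2\,dx=(u-1)^2/u$ for this. I will also restrict to $\sigma>0$ explicitly, so that the formula is well defined.
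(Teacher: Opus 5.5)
Your proposal is correct and follows the paper's proof: the strictly positive denominator $2\sigma^4$ reduces the sign question to that of $\sigma^2-1$, and the interpretive clause is read off from $\frac{dF_0}{dt}=-\int\rho|\nabla\mu_0|^2\,dx+\varepsilon D_\times$. Your extra remark, that $\varepsilon/u$ aligns with the pullback $2(1-u)$ for $u<1$ and opposes it for $u>1$, is a valid complementary reading. Your caveat about the closed form can also be settled directly: on a centered Gaussian, $\partial_x\mu_0=x(u-1)/u$ and $\partial_x\mu_F=-x/(2u^2)$, so $-\int\rho\,\partial_x\mu_0\,\partial_x\mu_F\,dx=(u-1)/(2u^2)$ for every $u>0$, without going through the $F_0'(u)\dot u$ matching.
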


\begin{proof}
Since
\[
D_\times(\sigma)
=
\frac{\sigma^2-1}{2\sigma^4},
\]
the denominator is strictly positive for $\sigma>0$. Therefore, the sign of
$D_\times(\sigma)$ is determined entirely by the numerator $\sigma^2-1$,
which changes sign uniquely at $\sigma=1$. Hence,
\[
D_\times(\sigma)<0
\quad
(\sigma<1),
\]
and
\[
D_\times(\sigma)>0
\quad
(\sigma>1).
\]
The interpretation follows from
\[
\frac{dF_0}{dt}
=
-
\int
\rho |\nabla\mu_0|^2\,dx
+
\varepsilon D_\times(t),
\]
since negative values of $D_\times$ strengthen the descent of $F_0$, whereas
positive values partially offset the baseline transport dissipation.
\end{proof}

Thus, on the Gaussian manifold, the sign-changing structure of
$D_\times(\sigma)$ separates the dynamics into a cooperative acceleration
regime for localized states with $\sigma<1$ and a transient interference
regime for broader states with $\sigma>1$.

\subsection{Ornstein--Uhlenbeck Setting and Gaussian Reduction}

To analyze the transient dynamics explicitly, we consider the one-dimensional Ornstein--Uhlenbeck potential
\[
V(x)=\frac{x^2}{2}.
\]
The corresponding equilibrium distribution is
\[
\rho_\ast(x)
=
\frac{1}{\sqrt{2\pi}}
\exp\left(
-\frac{x^2}{2}
\right).
\]

We restrict the dynamics to the centered Gaussian manifold
\[
\rho(x,t)
=
\frac{1}{\sqrt{2\pi u(t)}}
\exp\left(
-\frac{x^2}{2u(t)}
\right),
\]
where $u(t)=\sigma^2(t)$ denotes the variance. For the baseline Ornstein--Uhlenbeck Wasserstein flow, the variance evolves according to
\[
\dot{u}_0=2(1-u).
\]
For a centered Gaussian density,
\[
\mathcal{I}[\rho]
=
\int |\nabla\sqrt{\rho}|^2\,dx
=
\frac{1}{4u}.
\]
Projecting the Fisher contribution onto the Gaussian variance coordinate gives the additional variance drift
\[
\dot{u}_F=\frac{1}{u}.
\]
Therefore, under Fisher regularization with strength $\varepsilon>0$, the Gaussian-manifold variance dynamics reduce to
\[
\dot{u}
=
2(1-u)
+
\frac{\varepsilon}{u}.
\]

This reduced equation provides an analytically explicit realization of the interaction between transport dissipation and Fisher curvature. It reveals three qualitatively distinct transient regimes:
(i) a cooperative acceleration regime for states with $\sigma<1$,
(ii) an interference regime beyond the critical transition, and
(iii) convergence toward a displaced Fisher-regularized equilibrium.

The displaced equilibrium is obtained by solving
\[
2(1-u)+\frac{\varepsilon}{u}=0,
\]
which gives
\[
u_\ast
=
\frac{1+\sqrt{1+2\varepsilon}}{2}
>
1.
\]

\begin{proposition}[Bounded acceleration-window crossing time]
For Gaussian initial data with variance
\[
u_0<1,
\]
the transition time required to reach the critical scale
\[
u=1
\]
is given by
\[
T_{\mathrm{acc}}(u_0)
=
\int_{u_0}^{1}
\frac{u\,du}
{2u(1-u)+\varepsilon}.
\]
Moreover, for fixed $\varepsilon>0$, the transition time is finite and monotonically decreasing as the initial variance $u_0$ increases. In particular,
\[
T_{\mathrm{acc}}(u_0)
\le
\int_{0}^{1}
\frac{u\,du}
{2u(1-u)+\varepsilon}
=: C_\varepsilon < \infty.
\]
Hence, the duration of the cooperative acceleration regime is uniformly bounded by a constant depending only on the Fisher regularization strength $\varepsilon$.
\end{proposition}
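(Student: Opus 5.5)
The plan is to treat the reduced variance equation $\dot u = 2(1-u)+\varepsilon/u$ as a scalar autonomous ODE. On $(0,1]$ its right-hand side is strictly positive, so the solution is strictly increasing there. This lets us invert $t\mapsto u(t)$ and separate variables.

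First I will write the vector field as
\[
f(u)=\frac{2u(1-u)+\varepsilon}{u}.
\]
For $u\in(0,1]$ we have $2u(1-u)\ge 0$, hence $f(u)\ge \varepsilon/u>0$. Since $f$ is smooth on $(0,\infty)$, the solution from $u_0\in(0,1)$ exists locally and is unique. It is increasing while $u\le 1$, so it cannot approach $0$. The same bound $f\ge\varepsilon/u\ge\varepsilon$ on $(0,1]$ forces $u$ to reach $1$ in time at most $(1-u_0)/\varepsilon$. In particular, blow-up or stagnation before the crossing is ruled out.

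By monotonicity, $dt = du/f(u)$ along the trajectory. Integrating from $u_0$ to $1$ gives the claimed formula
\[
T_{\mathrm{acc}}(u_0)=\int_{u_0}^1 \frac{u\,du}{2u(1-u)+\varepsilon}.
\]

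Next I will establish finiteness and the uniform bound. On $[0,1]$ the integrand $g(u)=u/(2u(1-u)+\varepsilon)$ is continuous and nonnegative, because its denominator is at least $\varepsilon>0$. Its maximum is attained at $u=1$, but the crude bound $g(u)\le u/\varepsilon\le 1/\varepsilon$ already suffices. This gives
\[
C_\varepsilon=\int_0^1 g\le \frac{1}{2\varepsilon}<\infty.
\]
In fact $u\le1$ sharpens this to $\int_0^1 u/\varepsilon\,du=1/(2\varepsilon)$. Since $g\ge0$, we get $T_{\mathrm{acc}}(u_0)\le C_\varepsilon$, and this constant is independent of $u_0$.

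Monotonic decrease follows from the fundamental theorem of calculus: $dT_{\mathrm{acc}}/du_0=-g(u_0)<0$ for $u_0\in(0,1)$.

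The only genuinely delicate point is justifying the change of variables $dt=du/f(u)$ as a statement about the actual crossing time. This requires knowing that the trajectory reaches $1$ and does not merely approach it asymptotically. The positivity of $f$ on the closed interval $[u_0,1]$ (bounded below by $\varepsilon$) handles this, and it also removes any concern about the singular drift near $u=0$.

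I would close by noting the interpretation: by the preceding proposition, $D_\times<0$ exactly while $u<1$. The cooperative acceleration window is therefore precisely $[0,T_{\mathrm{acc}}(u_0)]$, and its length is bounded by $C_\varepsilon\le 1/(2\varepsilon)$.
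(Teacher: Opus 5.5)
Your proposal is correct and follows essentially the same route as the paper. Both arguments separate variables in $\dot u = 2(1-u)+\varepsilon/u$, use positivity of $2u(1-u)+\varepsilon$ on $(0,1]$ for finiteness, extend the integral to $[0,1]$ for the uniform bound, and read monotonicity off the integral representation. Your extra details are valid refinements that the paper leaves implicit or omits: the bound $\dot u\ge\varepsilon$ on $(0,1]$ shows the crossing $u=1$ is actually attained in finite time, and you give the explicit estimate $C_\varepsilon\le 1/(2\varepsilon)$.
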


\begin{proof}
The cooperative acceleration regime terminates when the trajectory reaches the critical scale $u=1$. From the reduced Gaussian-manifold dynamics,
\[
\dot u
=
2(1-u)
+
\frac{\varepsilon}{u},
\]
we obtain
\[
dt
=
\frac{u\,du}
{2u(1-u)+\varepsilon}.
\]
Integrating from $u_0$ to $1$ gives
\[
T_{\mathrm{acc}}(u_0)
=
\int_{u_0}^{1}
\frac{u\,du}
{2u(1-u)+\varepsilon}.
\]
For $\varepsilon>0$, the denominator satisfies
\[
2u(1-u)+\varepsilon>0
\]
on $0<u\leq 1$, so the integral is finite. Furthermore,
\[
T_{\mathrm{acc}}(u_0)
\le
\int_{0}^{1}
\frac{u\,du}
{2u(1-u)+\varepsilon}
=
C_\varepsilon,
\]
which establishes a uniform upper bound independent of the initial variance. Monotonicity with respect to $u_0$ follows directly from the integral representation.
\end{proof}

%%%%%%%%%%%%%%%%%%%%%%%%%%%%%

\section{Numerical Experiments}
\label{sec:numerics}

\subsection{Simulation Setup}

To investigate transient dynamics beyond the analytical Gaussian-manifold reduction, we numerically solve the one-dimensional Fisher-regularized Fokker--Planck equation
\[
\partial_t\rho
=
\nabla\cdot
\left[
\rho
\nabla
\left(
V(x)+\log\rho+\varepsilon\mu_F
\right)
\right],
\]
where
\[
\mu_F
=
-
\frac{\Delta\sqrt{\rho}}{\sqrt{\rho}},
\]
under the Ornstein--Uhlenbeck potential
\[
V(x)=\frac{x^2}{2}.
\]

Simulations are performed on the bounded one-dimensional domain
\[
x\in[-L,L],
\]
with no-flux boundary conditions. Spatial derivatives are approximated using centered finite differences on a uniform grid. Time integration is implemented using a semi-implicit scheme in which the Fisher contribution is treated implicitly to improve stability in strongly localized regimes, while the lower-order transport contribution is advanced consistently with the finite-difference discretization. Related structure-preserving and variational discretizations have been studied for nonlinear fourth-order and DLSS-type Wasserstein gradient flows \cite{bukal2021well,during2010gradient}.

Unless otherwise specified, simulations use
\[
L=10,
\qquad
\Delta x = 10^{-2},
\qquad
\Delta t = 10^{-4},
\]
with Fisher regularization strength
\[
\varepsilon=10^{-4}.
\]
This small-regularization regime was selected to isolate transient curvature effects while preserving the dominant transport-driven structure of the underlying Ornstein--Uhlenbeck dynamics. Additional experiments with nearby regularization strengths produced qualitatively similar transient interaction behavior.

In the discrete implementation, the Fisher potential is evaluated as
\[
\mu_{F,h}
=
-
\frac{\Delta_h \sqrt{\rho_h+\delta}}
{\sqrt{\rho_h+\delta}},
\]
where $\Delta_h$ denotes the centered second-difference operator and $\delta>0$ is a small numerical floor used to avoid division by zero in low-density tails. After each time step, the density is renormalized to unit mass to correct small numerical quadrature errors; the mass correction remained negligible in all reported simulations. Positivity is monitored throughout the simulation, and the reported experiments use parameter settings for which the density remains nonnegative up to the numerical floor.

To assess numerical robustness, additional grid- and timestep-refinement experiments were performed. The resulting trajectories exhibited qualitatively consistent crossing behavior, sign-transition structure, overshoot behavior, and asymptotic relaxation.

Both Gaussian and non-Gaussian initial conditions are considered to evaluate the robustness of the transient interaction mechanism beyond the Gaussian-manifold setting.

\subsection{Numerical Methodology}

The acceleration-window crossing time
\[
T_{\mathrm{acc}}
\]
is defined as the first crossing time satisfying
\[
u(t)=1,
\]
where
\[
u(t)\equiv\sigma^2(t)
\]
denotes the evolving variance. For Gaussian initial conditions, the variance follows directly from the reduced Gaussian-manifold dynamics. For non-Gaussian states, the variance is evaluated numerically from the evolving density distribution,
\[
u(t)
=
\int x^2\rho(x,t)\,dx.
\]

The cross-dissipation interaction is computed using
\[
D_\times(t)
=
-
\int
\rho
\nabla\mu_0
\cdot
\nabla\mu_F
\,dx,
\]
where
\[
\mu_0=V(x)+\log\rho.
\]
In the numerical implementation, $\mu_0$, $\mu_F$, and their spatial derivatives are evaluated on the computational grid using centered finite differences. The integral defining $D_\times(t)$ is then approximated by the finite-difference quadrature
\[
D_{\times,h}(t)
=
-
\sum_j
\rho_j
\left(D_h\mu_{0,j}\right)
\left(D_h\mu_{F,j}\right)
\Delta x,
\]
where $D_h$ denotes the centered first-difference operator. The same numerical density is used to compute the variance, the crossing time, and the cross-dissipation interaction.

For all reported experiments, mass conservation, positivity, crossing times, and the sign of $D_\times(t)$ were monitored throughout the simulation. Refinement experiments with smaller $\Delta x$ and $\Delta t$ produced qualitatively consistent trajectories, confirming that the observed sign-transition and overshoot behavior are not artifacts of a particular discretization scale.

\subsection{Gaussian Initial Conditions}

We first consider centered Gaussian initial states of the form
\[
\rho_0(x)
=
\frac{1}{\sqrt{2\pi\sigma_0^2}}
\exp\left(
-\frac{x^2}{2\sigma_0^2}
\right),
\]
with varying initial variance $\sigma_0^2$.

Figure~\ref{fig:gaussian_dynamics} illustrates the variance evolution under Fisher-regularized Wasserstein dynamics for several localized initial conditions. Consistent with the reduced Gaussian-manifold dynamics, highly localized states initially exhibit rapid curvature-driven expansion induced by Fisher regularization.

In the subcritical regime
\[
\sigma<1,
\]
the cross-dissipation interaction satisfies
\[
D_\times<0,
\]
indicating cooperative behavior between Fisher curvature and transport dissipation. Consequently, the trajectories exhibit accelerated variance expansion relative to the classical Ornstein--Uhlenbeck dynamics without Fisher regularization.

The trajectories subsequently reach the critical transition scale
\[
\sigma=1,
\]
where the cross-dissipation interaction changes sign. Beyond this transition,
\[
D_\times>0,
\]
and the dynamics enter a transient interference regime characterized by overshoot relative to the displaced Fisher-regularized equilibrium.

After the overshoot phase, the trajectories gradually relax toward the displaced equilibrium
\[
u_\ast
=
\frac{1+\sqrt{1+2\varepsilon}}{2}
>
1,
\]
while preserving the globally dissipative structure of the flow.

The observed transient dynamics are qualitatively consistent with the sign-transition behavior predicted by the interaction coefficient
\[
D_\times(\sigma)
=
\frac{\sigma^2-1}{2\sigma^4},
\]
which is exact under the Gaussian-manifold reduction.

\begin{figure}[t]
\centering
\includegraphics[width=1\linewidth]{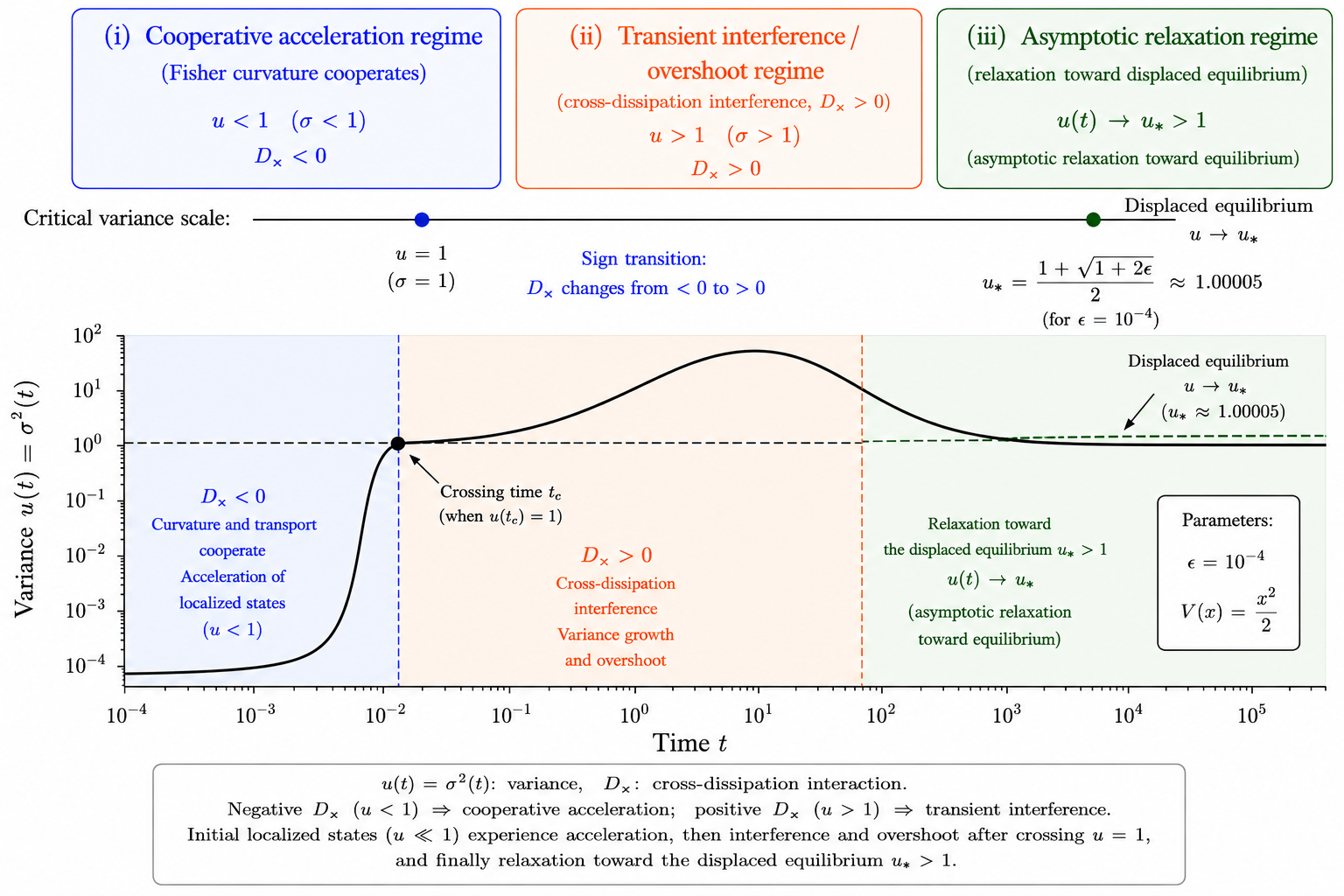}
\caption{
Variance evolution for localized Gaussian initial conditions under Fisher-regularized Wasserstein dynamics. The trajectories exhibit:
(i) cooperative acceleration for subcritical states with $\sigma<1$ and $D_\times<0$,
(ii) transient cross-dissipation interference and overshoot after crossing the critical transition $\sigma=1$ where $D_\times>0$,
and
(iii) asymptotic relaxation toward the displaced Fisher-regularized equilibrium $u_\ast>1$.
The observed sign-transition behavior is qualitatively consistent with the Gaussian-manifold interaction coefficient 
$D_\times(\sigma)
=
\frac{\sigma^2-1}{2\sigma^4}$.
}
\label{fig:gaussian_dynamics}
\end{figure}

\subsection{Cross-Dissipation Sign Transition}

To visualize the transient interaction structure directly, we compute the cross-dissipation interaction
\[
D_\times(t)
=
-
\int
\rho
\nabla\mu_0
\cdot
\nabla\mu_F
\,dx.
\]
Figure~\ref{fig:cross_dissipation} illustrates the evolution of the cross-dissipation interaction during Fisher-regularized Wasserstein dynamics. The interaction is initially negative for highly localized states,
\[
u<1,
\]
indicating cooperative behavior between Fisher curvature and transport dissipation.

As the variance approaches the critical transition scale
\[
u=1
\qquad
(\sigma=1),
\]
the interaction crosses zero and changes sign. This transition separates the cooperative acceleration regime from the transient interference regime.

In the subcritical regime,
\[
D_\times(t)<0,
\]
Fisher curvature cooperates with transport-driven dissipation, producing cooperative acceleration and rapid variance expansion relative to the classical Ornstein--Uhlenbeck dynamics.

After the crossing transition,
\[
D_\times(t)>0,
\]
the dynamics enter a transient interference regime in which the cross-dissipation interaction partially offsets the transport-driven dissipation structure. This interference produces overshoot relative to the displaced Fisher-regularized equilibrium.

As the dynamics approach the asymptotic regime,
\[
u(t)\to u_\ast,
\]
the positive interaction gradually weakens and decays toward zero from above, while the solution relaxes toward the displaced equilibrium state. The observed sign-transition structure is qualitatively consistent with the analytical Gaussian-manifold prediction
\[
D_\times(\sigma)
=
\frac{\sigma^2-1}{2\sigma^4}.
\]

\begin{figure}[t]
\centering
\includegraphics[width=1\linewidth]{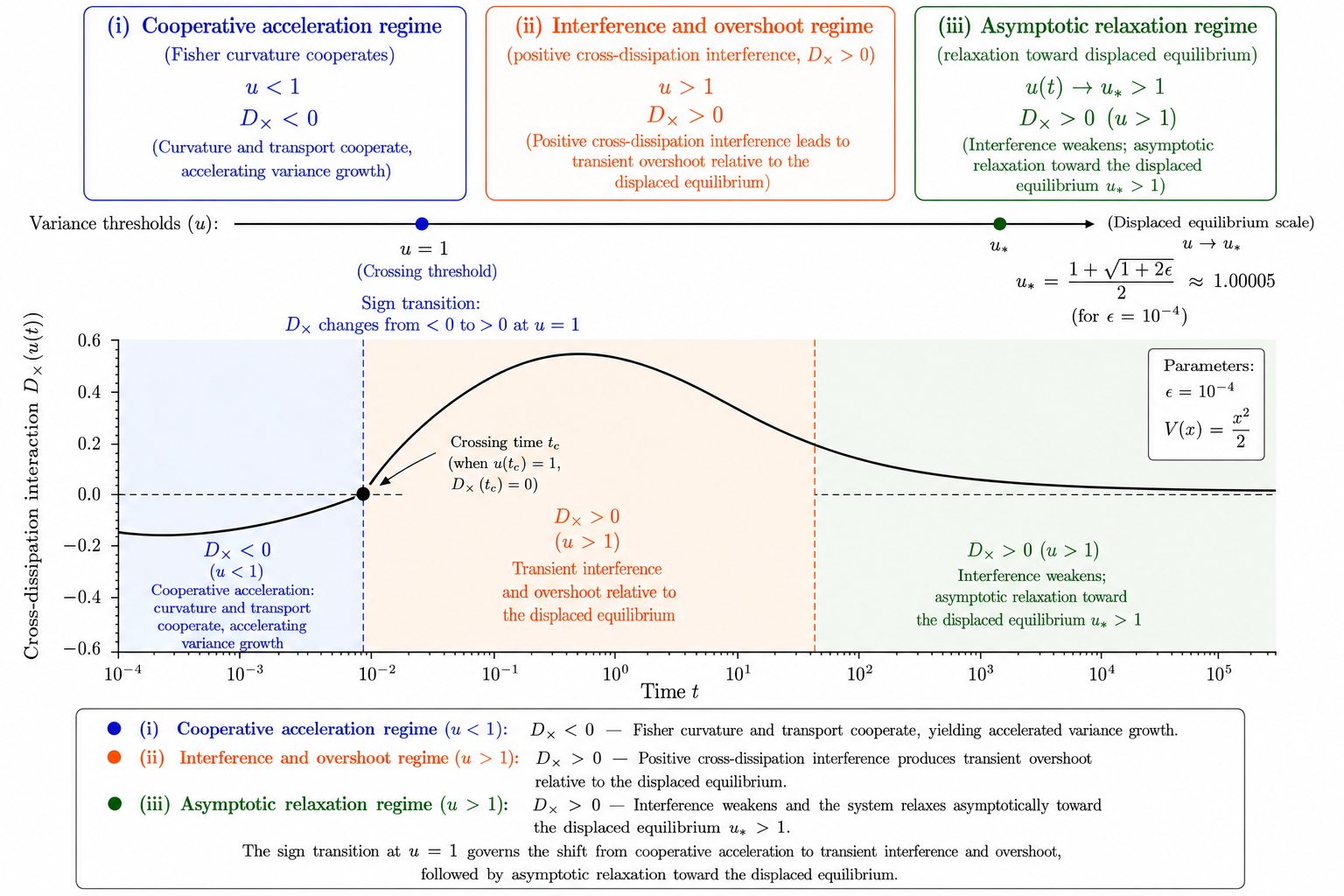}
\caption{
Evolution of the cross-dissipation interaction $D_\times(u(t))$ under Fisher-regularized Wasserstein dynamics. For localized states with $u<1$, the interaction remains negative and produces cooperative acceleration between Fisher curvature and transport-driven dissipation. The interaction crosses zero at the critical transition $u=1$, after which positive cross-dissipation interference generates overshoot relative to the displaced Fisher-regularized equilibrium. As the dynamics approach equilibrium, the positive interaction weakens asymptotically and decays toward zero from above.
}
\label{fig:cross_dissipation}
\end{figure}

\subsection{Non-Gaussian Initial Conditions}

To investigate whether qualitatively similar transient interaction behavior may extend beyond Gaussian-manifold dynamics, we additionally consider non-Gaussian initial states, including bimodal and Laplace distributions.

The bimodal initial condition is defined as
\[
\rho_0(x)
=
\frac12
\mathcal{N}(-a,\sigma_0^2)
+
\frac12
\mathcal{N}(a,\sigma_0^2),
\]
while the Laplace initial condition is given by
\[
\rho_0(x)
=
\frac{1}{2b}
\exp\left(
-\frac{|x|}{b}
\right).
\]

To further assess robustness, we additionally consider asymmetric bimodal states and unequal-scale Laplace configurations. These initializations introduce heterogeneous concentration scales and localized geometric structure beyond the Gaussian-manifold setting.

Figure~\ref{fig:nongaussian} presents representative transient dynamics for multiple non-Gaussian initialization families. Despite quantitative differences across distributions, the observed transient dynamics remain qualitatively consistent with the Gaussian-manifold predictions.

Across all initialization families, strongly localized states initially exhibit cooperative acceleration associated with the subcritical regime
\[
u<1,
\qquad
D_\times<0,
\]
where Fisher curvature cooperates with transport-driven dissipation and accelerates variance growth relative to the corresponding Ornstein--Uhlenbeck dynamics without Fisher regularization.

As the trajectories approach the critical transition scale
\[
u=1,
\]
the interaction changes sign and the dynamics enter a transient interference regime characterized by
\[
D_\times>0.
\]
This post-crossing regime produces overshoot relative to the displaced Fisher-regularized equilibrium before eventual asymptotic relaxation toward
\[
u_\ast>1.
\]

Although the detailed trajectories vary across initialization families, the qualitative sequence of:
(i) cooperative acceleration,
(ii) sign transition,
(iii) transient interference and overshoot,
and
(iv) asymptotic relaxation
is consistently observed across bimodal, asymmetric bimodal, and Laplace initializations.

Asymmetric and unequal-scale configurations additionally produce shifted crossing times and modified transient trajectories, indicating that the transient dynamics depend on geometric structure beyond variance alone.

These numerical results suggest that qualitatively similar transient interaction mechanisms may extend beyond Gaussian closure, although the present experiments do not constitute a general PDE-level characterization beyond the reduced Gaussian-manifold setting.

To further examine this behavior quantitatively, we compare the acceleration-window crossing time
\[
T_{\mathrm{acc}}
\]
against the initial information distance
\[
D_{\mathrm{KL}}
(\rho_0\Vert\rho_\ast)
\]
across multiple initialization families.

Figure~\ref{fig:delay_scaling} shows that all initialization families exhibit increasing acceleration-window crossing times with increasing Kullback--Leibler divergence from equilibrium. Although the quantitative trends differ across initialization classes, the overall relationship remains monotonically increasing across geometrically distinct states.

In particular, Gaussian initializations consistently produce the shortest crossing times, while increasingly non-Gaussian and heavy-tailed configurations exhibit larger transient interaction windows. These observations support the interpretation that larger information distance from equilibrium is associated with longer transient interaction dynamics, while avoiding claims of a universal scaling law.

\begin{figure}[ht!]
\centering
\includegraphics[width=0.9\linewidth]{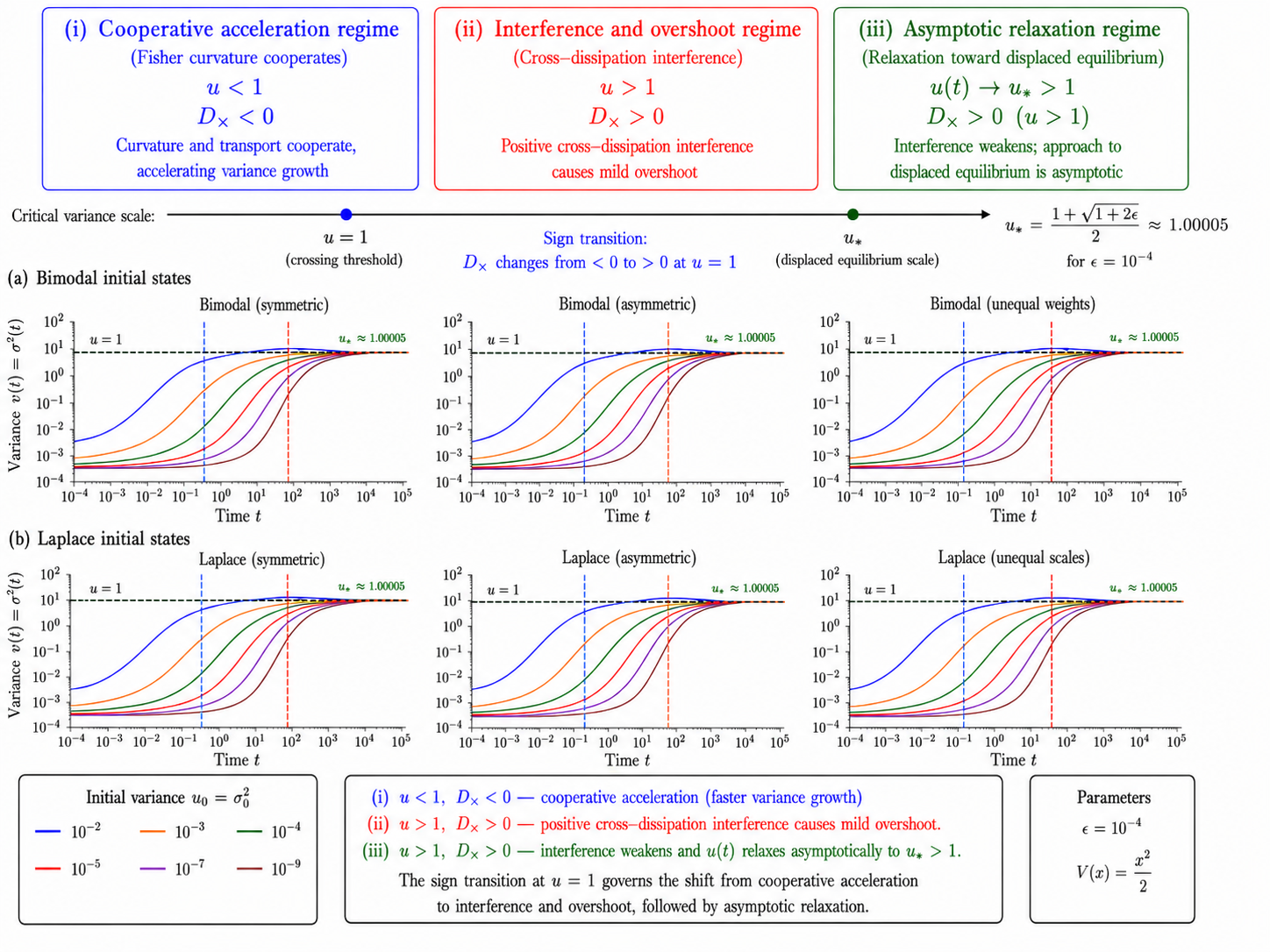}
\caption{
Representative transient dynamics for non-Gaussian initial conditions. Bimodal and Laplace initial states exhibit:
(i) cooperative acceleration in the subcritical regime $u<1$ with $D_\times<0$,
(ii) a sign transition near the critical scale $u=1$,
(iii) transient cross-dissipation interference and overshoot in the post-crossing regime with $D_\times>0$,
and
(iv) asymptotic relaxation toward the displaced Fisher-regularized equilibrium $u_\ast>1$.
Asymmetric and unequal-scale configurations produce shifted acceleration-window crossing times and modified transient trajectories. The observed transient behavior is qualitatively consistent with the Gaussian-manifold interaction dynamics.
}
\label{fig:nongaussian}
\end{figure}

\begin{figure}[ht!]
\centering
\includegraphics[width=0.85\linewidth]{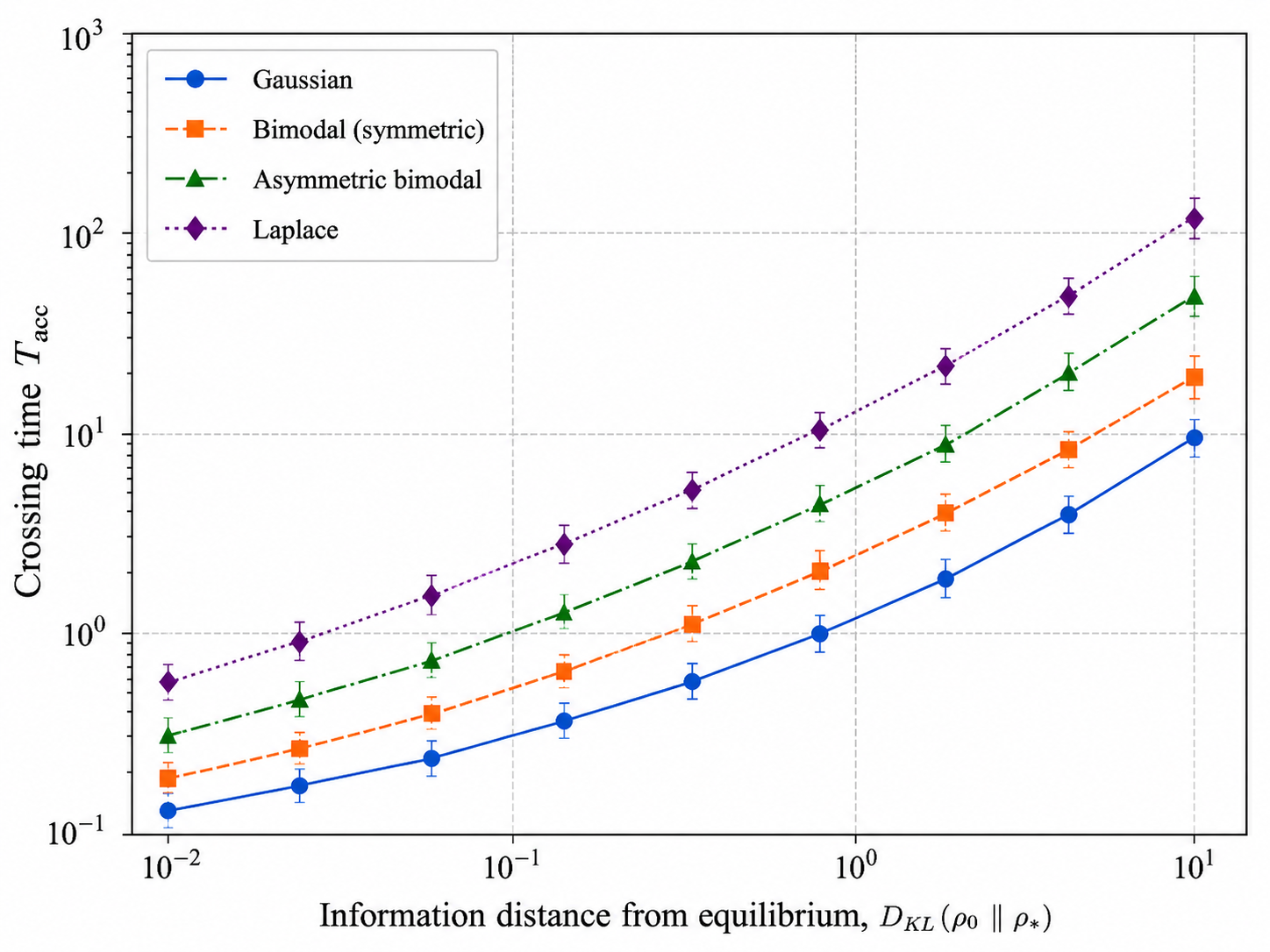}
\caption{
Representative acceleration-window crossing times $T_{\mathrm{acc}}$ versus information distance from equilibrium for Gaussian, bimodal, asymmetric bimodal, and Laplace initial conditions. All initialization families exhibit increasing crossing times with increasing Kullback--Leibler divergence from equilibrium. Although the quantitative trends differ across initialization classes, the overall relationship remains monotonically increasing across geometrically distinct non-Gaussian states.
}
\label{fig:delay_scaling}
\end{figure}
\section{Discussion}
\label{sec:discussion}

The present analysis shows that Fisher-information regularization can play a role beyond smoothing or stabilization in Wasserstein gradient flows. In the Ornstein--Uhlenbeck setting, it induces a sign-changing cross-dissipation interaction between transport relaxation and Fisher-information curvature. This interaction reorganizes intermediate-time dynamics through cooperative acceleration, transient interference, overshoot, and relaxation toward a displaced equilibrium.

On the Gaussian manifold, the cross-dissipation coefficient
\[
D_\times(\sigma)
=
\frac{\sigma^2-1}{2\sigma^4}
\]
serves as the organizing quantity for this transient behavior. The coefficient changes sign at the critical scale $\sigma=1$. For localized states with $\sigma<1$, the interaction is negative and strengthens the descent of the baseline free energy, producing a cooperative acceleration regime in which Fisher curvature acts in concert with transport dissipation. Once the trajectory crosses the critical scale, the interaction becomes positive and partially offsets the baseline transport dissipation. This produces a transient interference regime and overshoot before relaxation toward the displaced Fisher-regularized equilibrium.

This interference should not be interpreted as instability or loss of dissipation for the full Fisher-regularized flow. Instead, it reflects a redistribution of dissipation between the baseline free-energy component and the Fisher-information contribution. In this sense, the Fisher term modifies the transient dissipation geometry while preserving the global dissipative structure of the regularized dynamics.

Equivalently, the monotone Lyapunov functional for the regularized flow is
\[
F_\varepsilon[\rho]
=
F_0[\rho]
+
\varepsilon I[\rho],
\]
not the baseline functional $F_0[\rho]$ alone. Therefore, $F_0$ need not decrease monotonically along the Fisher-regularized dynamics. The observed non-monotone behavior of $F_0$, including descent past the unregularized minimum followed by relaxation toward the displaced equilibrium, is therefore a diagnostic of the equilibrium displacement induced by Fisher regularization rather than a failure of dissipation.

The numerical experiments suggest that qualitatively similar transient behavior may occur beyond the Gaussian manifold. Bimodal, asymmetric bimodal, and Laplace initial conditions exhibit the same qualitative sequence of cooperative acceleration, sign transition, transient interference, overshoot, and asymptotic relaxation. However, the detailed trajectories and crossing times depend on geometric features beyond variance alone. Thus, the Gaussian reduction captures a useful organizing mechanism, but it does not provide a complete description of the full PDE dynamics for general initial data.

The observed relationship between the acceleration-window crossing time
\[
T_{\mathrm{acc}}
\]
and the initial information distance
\[
D_{\mathrm{KL}}(\rho_0\Vert\rho_\ast)
\]
further suggests that transient interaction windows are influenced by the information-geometric distance from equilibrium. More localized or information-distant initial states tend to exhibit longer cooperative acceleration windows before entering the interference regime. Because the quantitative trends vary across initialization families, this relationship should be interpreted as qualitative rather than as a universal scaling law.

More broadly, the results illustrate how curvature-dependent regularization can reshape intermediate-time behavior in dissipative PDEs. Related transient interaction effects may arise in entropy-regularized transport, quantum drift-diffusion models, higher-order Wasserstein flows, diffusion-based generative dynamics, and other gradient-flow systems where transport dissipation is coupled to information-geometric or curvature-dependent regularization.

Several limitations remain. First, the analytical results are developed primarily in a one-dimensional Ornstein--Uhlenbeck setting and rely on a Gaussian-manifold reduction for explicit closed-form characterization. Second, the non-Gaussian experiments provide qualitative numerical evidence, but they do not establish a general PDE-level theorem beyond Gaussian closure. Third, the numerical simulations are designed to characterize transient regimes and sign-transition behavior rather than to provide an exhaustive high-precision numerical analysis of the full Fisher-regularized dynamics.

Future work should address the multidimensional structure of the Fisher cross-dissipation interaction, develop asymptotic estimates beyond Gaussian closure, and establish functional-analytic conditions under which related sign-changing transient mechanisms persist for broader classes of initial data and potentials. It would also be useful to examine whether analogous transient acceleration and interference regimes arise in other information-regularized gradient flows.

Overall, the results provide a geometric perspective on how Fisher-information curvature can reorganize nonequilibrium relaxation in Wasserstein gradient flows. The emergence of cooperative acceleration, sign-changing cross-dissipation, and transient interference highlights an intermediate-time dynamical structure that is not captured by asymptotic dissipation analysis alone.

\section{Conclusion}
\label{sec:conclusion}

In this work, we studied Fisher-regularized Wasserstein gradient flows and identified a sign-changing cross-dissipation interaction between transport dissipation and Fisher-information geometry. Using the Ornstein--Uhlenbeck Fokker--Planck system as an analytically tractable setting, we showed that Fisher regularization can reorganize intermediate-time relaxation dynamics through cooperative acceleration, transient interference, overshoot, and relaxation toward a displaced equilibrium.

The key organizing quantity is the Gaussian-manifold interaction coefficient
\[
D_\times(\sigma)
=
\frac{\sigma^2-1}{2\sigma^4},
\]
which changes sign at the critical scale $\sigma=1$. For localized states with $\sigma<1$, the interaction is negative and strengthens the descent of the baseline free energy, producing cooperative acceleration. Beyond the critical transition, the interaction becomes positive and partially offsets the baseline transport dissipation, giving rise to transient interference and overshoot before asymptotic relaxation.

We also established a bounded acceleration-window result, showing that the cooperative acceleration phase has finite duration with an upper bound determined only by the Fisher regularization strength. Numerical experiments with Gaussian and representative non-Gaussian initial conditions, including bimodal and Laplace distributions, exhibited qualitatively similar sign-transition behavior. These results suggest that the Gaussian reduction captures a useful organizing transient mechanism, while the extension to general initial data remains an open analytical problem.

More broadly, the results provide a geometric perspective on how Fisher-information curvature can reshape nonequilibrium relaxation in Wasserstein gradient flows while preserving the globally dissipative structure of the regularized dynamics. Future work will focus on multidimensional extensions, analytical characterization beyond Gaussian closure, and related transient interaction phenomena in other information-regularized gradient-flow systems.

% \makeatletter
% \def\bibsection{\section*{\refname}}
% \makeatother

\bibliographystyle{elsarticle-num}
\bibliography{references}

\end{document}